\newtheorem{prop}{Proposition}
\title{\huge\textbf{
Finite reduction and Morse index 
estimates for mechanical systems}
}
\author{\hspace{-1cm}{\sc Franco Cardin${\, }^{1}$\quad Giuseppe De Marco${\,
}^{1}$ \quad Alessandro Sfondrini${\,
}^{2,3}$                  }
\vspace{0.3cm}
\\
\hspace{-1cm}${\ }^{1}$ Dipartimento di Matematica Pura ed Applicata,
Universit\`a di Padova,\\
\hspace{-1cm}Via Trieste, 63 - 35121 Padova, Italy
\\ 
\hspace{-1cm}${\ }^{2}$ 
Scuola Galileiana di Studi Superiori,
Universit\`a di Padova,\\
 \hspace{-1cm}Via G. Prati, 19 - 35122 Padova, Italy\\
 \hspace{-1cm}${\ }^{3}$ Dipartimento di Fisica ``G. Galilei'', Universit\`a di Padova, \\
 \hspace{-1cm}Via F. Marzolo, 8 - 35131 Padova, Italy
}
\begin{document}
\maketitle

\begin{abstract}\par\noindent
A simple version  of exact finite dimensional reduction for the variational setting of mechanical systems  is presented. It is  worked out by means of a thorough global version of the implicit function theorem for monotone operators. Moreover, the Hessian of the reduced function preserves all the relevant information of the original one, by  Schur's complement, which spontaneously appears in this context.  Finally, the results are straightforwardly extended to the case of a Dirichlet problem on a bounded domain.\\
\textsc{Keywords:}  Calculus of Variations, Finite-Dimensional Reduction, Morse index.
\\
{AMS subject classifications: 70H25,
37B30,
35J61.
 }

\end{abstract}

\section{Introduction}
Conservative classical mechanics can be regarded as a collection of variational problems: a mechanical path will be a curve which makes stationary the action functional. Such a critical path will be a local minimum for the action in the uniform topology in the space of the admissible (w.r.t. the boundary conditions) paths if there are no conjugate points along it; this condition is known to be related to the positive-definiteness of the second variation --the Hessian-- of the action functional. A precursor was
Jacobi who  proposed an early theory of the second variation in 19th century. But the very notion of conjugate point first appeared in Bliss \cite{bliss}  --see e.g. the historical notes by Carath\'eodory \cite{Cara} and Fraser \cite{fraser}. The definitive settlement lays on the well known framework of the Morse theory, \cite{morse, milnor, duistermaat}.

In practice, one is often forced to solve the variational problem numerically. It turns out that under certain hypotheses it is sufficient to solve a reduced problem in a subspace of the space of admissible paths, involving a \emph{finite} number of equations, and recover the very same solution of the full problem. Such a technique is sometimes referred to as an \emph{exact finite reduction} method. An early approch of this kind can be recognized in \cite{MarPro} where it was employed in the framework of infinite dimensional Morse theory. Later, two different reduction schemes have been proposed, both initially used to give new versions and proofs for the `Poincar\'e last geometric theorem': the broken geodesic method by Chaperon \cite{ChZ,Chap1, CH87, CH89, Chap2} and the Amann-Conley-Zehnder reduction \cite{AmZeh, {ConZeh83}, ConZeh}. We will focus on the latter approach, a concise exposition of which can be also found in \cite{Viterbo}. Such a method can be employed in many systems, such as fluid \cite{CarTeb} and molecular dynamics \cite{TurPasCar} and field theory \cite{cardin}. 

We propose a simple exact reduction theorem for mechanical Lagrangians with fixed endpoints as boundary conditions, giving an estimate of the optimal dimension of the reduced space (the {\it cutoff} $N$) and comparing our results for the cutoff with the ones obtained by means of fixed point techniques \cite{CarMar}. We then show that the Morse index of the solution can be easily expressed in terms of numerically computable quantities; here, as in many other situations \cite{Carl,Co}, the linear algebra trick known as `Schur complement' spontaneously appears. The results are straightforwardly extended --\emph{Sect. 5}--  to the case of Dirichlet problem on a bounded $\Omega\subset\mathbb{R}^m$.
\section{Setting}
We consider a mechanical Lagrangian,
\begin{equation}\mathcal{L}(q,\dot{q})=\frac{1}{2}|\dot{q}|^2-V(q)
\end{equation}
where the potential energy $V\in\mathcal{C}^2(\mathbb{R}^n,\mathbb{R})$ satisfies
\begin{equation}\sup_{q\in\mathbb{R}^n}\left|V''(q)\right|=C<\infty
\label{hess}
\end{equation}
so that the Hessian of the Lagrangian is bounded:
\begin{equation}\sup_{(q,\dot{q})\in\mathbb{R}^{2n}}\left|  \mathrm{Hess}[\mathcal{L}](q,\dot{q})\right|=\max\{1,C\}\equiv \tilde{C}
\end{equation}

One can now construct the action functional acting on the curves of the  Sobolev space $H^1$ with fixed endpoints: let
\[\Gamma\equiv\left\{\gamma\in H^1([0,T],\mathbb{R}^n),\ \gamma(0)=q_0,\,\gamma(T)=q_T\right\}\]
\[\tilde{L}: \Gamma\ni\gamma(\cdot)\mapsto\int_0^T\mathcal{L}\left(\gamma(t),\dot{\gamma}(t)\right)dt\in\mathbb{R}\]
Observe that the affine space $\Gamma$ can be parametrized 
\begin{equation}
\gamma(t)=q_0+\frac{q_T-q_0}{T}t+c(t),\ \ \ c\in \mathcal{H}\equiv H^1_0([0,T],\mathbb{R}^n)
\end{equation}
where $\mathcal{H}$ is the \lq loop space based at $0$\rq\ ($c(0)=c(T)=0$).
We define the action functional $L=\tilde{L}\circ\iota$ on $\mathcal{H}$:
\[L(c)\equiv \tilde{L}(\iota(c)),\ \ \ \iota(c)(t)=q_0+\frac{q_T-q_0}{T}t+c(t)\]

It is easy to see that $L$ is twice continuously differentiable in the Fr\'echet sense. First we have, for every $c,h \in \mathcal{H}$
\begin{equation}
dL(c)[h]=\left\langle \nabla L(c), h\right\rangle=\int_0^T \left(\dot{c}(t)\cdot\dot{h}(t)- V'(\iota(c(t)))[ h(t)]\right)dt
\end{equation}
where for every $c\in\mathcal{H}$ we can interpret $dL(c)\in\mathcal{H}^*$ as an element $\nabla L(c)\in\mathcal{H}$ by Riesz representation theorem.  Next, we find that the \lq second variation\rq\ is the bilinear form in $h,k\in\mathcal{H}$:
\begin{equation}
d^2L(c)\,[h,k]=\int_0^T \left(\dot{h}(t)\cdot\dot{k}(t)-V''(\iota(c(t)))\, [h(t),k(t)]\right)dt
\end{equation}
which can also be considered as a self-adjoint linear operator on the Hilbert space $\mathcal{H}$. We represent every $c\in \mathcal{H}$ by means of  the orthogonal basis $(\varphi^j_k(t))_{k\ge 1,\,j\in\{1,\dots,n\}}$, where for each $k>0$ and $j\in\{1,\dots,n\}$ the function $\varphi^j_k :[0,T]\to\mathbb{R}^n$ has all components $0$, except for the $j-$th which  is $\sqrt{2/T}\sin(k\,t/T)$; to simplify notation, we set $\varphi_k=(\varphi_k^1,\dots,\,\varphi_k^n)$,  $c^{(k)}=(c^1_k,\dots,c^n_k)$ and $c^{(k)}\cdot\varphi_k=\sum_{j=1}^nc^j_k\,\varphi^j_k$ so that we may write
\begin{equation}
c(t)=\sum_{k>0} c^{(k)}\cdot \varphi_k(t);\,\,
 \left\|c\right\|^2_{H^1_0}:=\int_0^T|\dot c(t)|^2\,dt=\sum_{k>0}\frac{\pi^2}{T^2}k^2\, |c^{(k)}|^2;
\end{equation}
 notice that then $\int_0^T|c(t)|^2\,dt=\sum_{k>0}|c^{(k)}|^2$.

\section{Exact finite reduction}
For any given natural number $N$, using the ordered basis $\varphi^j_k$, we can decompose $\mathcal{H}$ in two orthogonal subspaces, one of which is finite dimensional (of dimension $nN$):
\[\mathcal{H}=\mathcal{U}_N\oplus\mathcal{V}_N,\ \ \ \mathcal{U}_N=\mathrm{span}\{\varphi^j_k,\  1\le k\leq N,\,j\in\{1,\dots,n\}\} \]
Adopting a common notation for exact reduction techniques, we define the orthogonal  projectors $\mathbb{P}_N, \mathbb{Q}_N$ onto $\mathcal{U}_N$ and $\mathcal{V}_N$, respectively. $\mathcal{U}_N^*$ will be identified with the subspace of $\mathcal{H}^*$ consisting of functionals vanishing on $\mathcal{V}_N$, and similarly $\mathcal{V}_N^*$. When considering the differential of e.g. $L$ on $\mathcal{U}_N\oplus\mathcal{V}_N$, we will write $dL=\partial_{\mathcal{U}_N}L+\partial_{\mathcal{V}_N}L$ with $\partial_{\mathcal{U}_N}L\in\mathcal{U}_N^*$ and $\partial_{\mathcal{V}_N}L\in\mathcal{V}_N^*$. By Riesz theorem they will be identified with $\mathbb{P}_N\nabla L$ and $\mathbb{Q}_N\nabla L$ respectively. When interpreting $d^2L$ as a symmetric linear self-operator of $\cal H$, we will write \[d^2L=\partial_{\mathcal{U}_N}^2L+\partial_{\mathcal{V}_N}\partial_{\mathcal{U}_N}L+\partial_{\mathcal{U}_N}\partial_{\mathcal{V}_N}L+\partial_{\mathcal{V}_N}^2L\]
meaning $\partial_{\mathcal{U}_N}^2L=\mathbb{P}_N\, d^2L\,\mathbb{P}_N$, $\partial_{\mathcal{U}_N}\partial_{\mathcal{V}_N}L=\mathbb{Q}_N\,d^2L\mathbb{P}_N$, and so on. We will also denote by the same symbol operators such as $\partial_{\mathcal{U}_N}^2L$ on $\mathcal{H}$ and the one that it induces on $\mathcal{U}_N$.

In the following we use $\lfloor x\rfloor$ to denote the largest integer not greater than the real number $x$.

\begin{prop}For $N\ge \left\lfloor \frac{T\,\sqrt C}{\pi} \right\rfloor$  and any fixed $\bar{u}\in\mathcal{U}_N$, the {\rm restricted nonlinear operator}
\[ F(\bar u,\cdot)\equiv\mathbb{Q}_N\nabla L (\bar{u}+\cdot): \mathcal{V}_N\longmapsto\mathcal{V}_N\]
is strongly monotone.\label{monot}\end{prop}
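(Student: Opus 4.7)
My plan is to show strong monotonicity directly by estimating
\[
\bigl\langle F(\bar u,v_1)-F(\bar u,v_2),\,v_1-v_2\bigr\rangle \;\geq\; \alpha\,\|v_1-v_2\|^2_{H^1_0}
\]
for some $\alpha>0$ independent of $\bar u$, and then reading off the condition on $N$. First I would exploit the fact that $w:=v_1-v_2\in\mathcal{V}_N$, so $\mathbb{Q}_N w=w$ and the projector $\mathbb{Q}_N$ in the definition of $F$ can be dropped when pairing against $w$. I can then rewrite the pairing as
\[
\bigl\langle \nabla L(\bar u+v_1)-\nabla L(\bar u+v_2),\,w\bigr\rangle
=\int_0^1 d^2L\bigl(\bar u+v_2+s\,w\bigr)[w,w]\,ds,
\]
by the fundamental theorem of calculus applied to $s\mapsto \nabla L(\bar u+v_2+sw)$.

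Next, using the explicit formula for $d^2L$ from the setting section and the uniform bound $|V''|\le C$, I would estimate
\[
d^2L(c)[w,w]\;=\;\int_0^T|\dot w(t)|^2\,dt-\int_0^T V''(\iota(c(t)))[w(t),w(t)]\,dt \;\ge\;\|w\|^2_{H^1_0}-C\int_0^T|w(t)|^2\,dt,
\]
and this lower bound is independent of $s$, hence of the integration in $s$.

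The crucial step, and the one that pins down the cutoff $N\ge\lfloor T\sqrt C/\pi\rfloor$, is a sharp spectral Poincar\'e-type inequality on $\mathcal{V}_N$: since $w\in\mathcal{V}_N$ only involves the basis modes with $k>N$, one has, from the Fourier formulas recalled in the previous section,
\[
\|w\|^2_{H^1_0}=\sum_{k>N}\frac{\pi^2 k^2}{T^2}|w^{(k)}|^2\;\ge\;\frac{\pi^2(N+1)^2}{T^2}\sum_{k>N}|w^{(k)}|^2=\frac{\pi^2(N+1)^2}{T^2}\int_0^T|w(t)|^2\,dt.
\]
Putting the estimates together yields
\[
\bigl\langle F(\bar u,v_1)-F(\bar u,v_2),w\bigr\rangle\;\ge\;\Bigl(1-\frac{CT^2}{\pi^2(N+1)^2}\Bigr)\|w\|^2_{H^1_0},
\]
and the bracket is strictly positive precisely when $N+1>T\sqrt C/\pi$, which is equivalent to $N\ge\lfloor T\sqrt C/\pi\rfloor$ (with strict inequality in the non-integer case, and with $N+1>N_0$ in the integer case $T\sqrt C/\pi=N_0$). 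This proves strong monotonicity with constant $\alpha=1-CT^2/(\pi^2(N+1)^2)$.

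I do not anticipate any real obstacle: the calculation is a soft combination of $\mathcal{C}^2$-regularity plus the FTC, the uniform Hessian bound \eqref{hess}, and the elementary spectral gap on $\mathcal{V}_N$. The only point that needs a little care is checking that the floor $\lfloor\,\cdot\,\rfloor$ in the hypothesis really suffices (rather than, say, $N\ge\lceil T\sqrt C/\pi\rceil$), and this is handled by the observation above that $N+1>T\sqrt C/\pi$ whenever $N\ge\lfloor T\sqrt C/\pi\rfloor$.
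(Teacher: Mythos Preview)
Your proof is correct and follows essentially the same route as the paper: the fundamental theorem of calculus to pass from $\nabla L$ to $d^2L$, the uniform bound $|V''|\le C$, and then the spectral gap on $\mathcal{V}_N$ to extract the constant $\mu=1-CT^2/(\pi(N+1))^2$. The only cosmetic difference is that you phrase the last step as a Poincar\'e-type inequality on $\mathcal{V}_N$, whereas the paper writes out the Fourier sum $\sum_{k>N}(\pi^2k^2/T^2-C)|v^{(k)}|^2$ directly; these are equivalent.
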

\begin{proof}By strongly monotone we mean that there exists some ${\mu}_{\bar u} =\mu>0$ such that for any $v_1,v_2\in\mathcal{V}_N$ 
\begin{equation}\left\langle \mathbb{Q}_N\nabla L (\bar{u}+v_2)-\mathbb{Q}_N\nabla L(\bar{u}+v_1),v_2-v_1\right\rangle\geq \mu \left\|v_2-v_1\right\|^2_{{\mathcal V}_N}
\end{equation}
We have that 
\[\left\langle \mathbb{Q}_N\nabla L (\bar{u}+v_2)-\mathbb{Q}_N\nabla L(\bar{u}+v_1),v_2-v_1\right\rangle=\]
\[=\int_0^1 d^2L(\bar{u}+v_1+\lambda(v_2-v_1))\,[v_2-v_1,v_2-v_1]d\lambda\geq\]
\[\geq \int_0^1\int_0^T\left(|\dot{v}_2-\dot{v_1}|^2-\sup_{q\in\mathbb{R}^n}|V''(q)|\,|v_2-v_1|^2\right)dt\,d\lambda,\]
by setting $v\equiv v_2-v_1$,
\[=\int_0^T\left(|\dot{v}|^2-C\,|v|^2\right)dt=\sum_{k>N}\left(\frac{\pi^2 k^2}{T^2}-C\right)\,|v^{(k)}|^2.
\]
We wish to prove that there exist a positive integer $N$ and $\mu>0$ such that, for every $v\in \mathcal V_N$ we have
\[\sum_{k>N}\left(\frac{\pi^2 k^2}{T^2}-C\right)\,|v^{(k)}|^2\ge\mu\,\|v\|_{H^1_0}^2=\mu\,\sum_{k>N}\frac{\pi^2 k^2}{T^2}\,\,|v^{(k)}|^2;\]
this is plainly impossible if for some $k>N$ we have $\pi^2k^2/T^2-C\le0$. 
If $N=\lfloor T\,\sqrt{C}/\pi\rfloor$, we see that $(\pi\,k/T)^2-C>0$ for 
every integer $k>N$; let 
\[\mu=\min_{k> N}\left\{((\pi\,k/T)^2-C)/(\pi\,k/T)^2\right\}=1-C\,T^2/(\pi\,(N+1))^2>0;\]
 clearly, $N$ and $\mu$ are as required.
\end{proof}
 The {\em cutoff $N=\lfloor T\,\sqrt{C}/\pi\rfloor$}    measures, in a sense, the minimum amount of information  needed to describe the solution of the variational problem, and also provides an upper bound to the degeneracy of the number of solutions, as we shall see later. Hereafter $N$ will be at least as large as $\lfloor T\,\sqrt{C}/\pi\rfloor$.

From the above calculation it also easily  follows that $\partial^2_{\mathcal{V}_N}L$ is a uniformly positive definite self adjoint linear operator on $\mathcal{V}_N$ and thus a Hilbert spaces isomorphism.

We are now ready to prove the key statement for the exact finite reduction:
\begin{prop}There exists a continuously Fr\'echet differentiable map
 $\tilde{v}:\mathcal{U}_N\mapsto\mathcal{V}_N$  such that, for any $u\in {\mathcal U}_N$,
\begin{equation}
\mathbb{Q}_N \nabla L\left (u+\tilde v\left(u\right)\right)=0,\ \ {\rm or}\ \
\left\langle \nabla L\left(u+v\right),\delta v\right\rangle=0 \ \ \forall \delta v\in\mathcal{V}_N \ \Leftrightarrow  \ v=\tilde{v}(u)
\label{vtilda}
\end{equation}
\end{prop}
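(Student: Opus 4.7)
The plan is to realise $\tilde v$ as the solution of the equation $F(u,v)=\mathbb{Q}_N\nabla L(u+v)=0$ in $v\in\mathcal{V}_N$ for each $u\in\mathcal{U}_N$, using a global implicit function theorem for strongly monotone operators. By Proposition~\ref{monot} (applied with $\bar u=u$) the map $F(u,\cdot):\mathcal{V}_N\to\mathcal{V}_N$ is strongly monotone with modulus $\mu$ that depends only on $N,T,C$ (an inspection of the proof shows it does not depend on $u$ at all); moreover $F(u,\cdot)$ is $C^1$ because $L\in C^2(\mathcal{H})$.

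First I would establish pointwise existence and uniqueness of $\tilde v(u)$. Injectivity is immediate from strong monotonicity. Existence is the content of the Minty--Browder / Zarantonello theorem: a continuous, strongly monotone map from a Hilbert space to itself is a bijection. This yields a well-defined $\tilde v:\mathcal{U}_N\to\mathcal{V}_N$ with $\mathbb{Q}_N\nabla L(u+\tilde v(u))=0$, which via the identification $\partial_{\mathcal{V}_N}L=\mathbb{Q}_N\nabla L$ is equivalent to the variational characterization in (\ref{vtilda}).

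Next, Lipschitz continuity of $\tilde v$ follows directly from strong monotonicity. Fix $u_1,u_2\in\mathcal{U}_N$ and apply the strong monotonicity inequality to $F(u_1,\cdot)$ with $v_i=\tilde v(u_i)$; using $F(u_1,\tilde v(u_1))=0=F(u_2,\tilde v(u_2))$ one gets
\[
\mu\,\|\tilde v(u_1)-\tilde v(u_2)\|^2 \le \bigl\langle F(u_1,\tilde v(u_2))-F(u_2,\tilde v(u_2)),\,\tilde v(u_2)-\tilde v(u_1)\bigr\rangle,
\]
and the right-hand side is bounded by $\tilde C\,\|u_1-u_2\|\,\|\tilde v(u_1)-\tilde v(u_2)\|$ because $\nabla L$ is globally Lipschitz with constant $\tilde C$ (by the uniform bound on $d^2L$). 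Hence $\|\tilde v(u_1)-\tilde v(u_2)\|\le(\tilde C/\mu)\,\|u_1-u_2\|$. For $C^1$ regularity one then invokes the classical local implicit function theorem at any $(u_0,\tilde v(u_0))$: the partial derivative $\partial_v F=\partial_{\mathcal{V}_N}^2L$ is a Hilbert space isomorphism on $\mathcal{V}_N$, as remarked immediately after the proof of Proposition~\ref{monot}. The local IFT produces a $C^1$ implicit function near $u_0$, which must coincide with $\tilde v$ by the uniqueness already established; since $u_0$ is arbitrary, $\tilde v\in C^1(\mathcal{U}_N,\mathcal{V}_N)$.

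The main obstacle is the \emph{global} character of the existence step: the local IFT alone would only deliver a local branch around some distinguished pair $(u_0,v_0)$, and it is precisely the uniform strong monotonicity (whose modulus $\mu$ does not degenerate as $u$ varies over $\mathcal{U}_N$) that upgrades local solvability to the global surjectivity of $F(u,\cdot)$ needed here, in the spirit of the Minty--Browder / Zarantonello theorem.
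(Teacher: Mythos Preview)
Your argument is correct and follows essentially the same route as the paper: strong monotonicity (via Minty--Browder/Zarantonello, which is the content of the paper's Appendix proposition) gives global existence and uniqueness of $\tilde v(u)$, and then the classical implicit function theorem, applied using the invertibility of $\partial_{\mathcal V_N}^2 L$, upgrades this to $C^1$. One small quibble: the intermediate Lipschitz estimate is not needed for the conclusion, and the constant you quote as $\tilde C$ is the bound on $\mathrm{Hess}[\mathcal L]$, not the operator norm of $d^2L$ on $H^1_0$; the latter is something like $1+CT^2/\pi^2$, but since any finite bound suffices this does not affect the argument.
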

\begin{proof}Strong monotonicity of the mapping $F(u,\cdot):\mathcal V_N\to\mathcal V_N$ implies that this map is a self--homeomorphism of $\mathcal V_N$ (see e.g. \emph{Theorem 11.2} in \emph{Chap. 3} of \cite{deimling}; but stronger hypotheses allow us a much simpler proof than the one given there, see \emph{Appendix}). If we rewrite $(\ref{vtilda})$ as $F(u,v)=0$, we immediately have that for  each fixed $u\in\mathcal{U}_N$ there is a  unique solution $\tilde{v}(u)\in\mathcal{V}_N$. That $u\mapsto \tilde v(u)$ is continuously differentiable follows from the implicit function theorem: as remarked above  $\partial^2_{\mathcal{V}_N}L$ is uniformly positive definite and thus a Hilbert spaces isomorphism; and it is immediate to see that this linear map is also the partial differential $\partial_{\mathcal V_N}F(u,v)$ on the subspace $\mathcal V_N$.\end{proof}

\par\noindent
{\bf Remark\,1.}\ To obtain an explicit expression for $\tilde{v}$ one can proceed as follows: first one has to estimate $C<\infty$, and find the corresponding value of $N=\lfloor T\,\sqrt{C}/\pi\rfloor$ as obtained in \emph{Proposition \ref{monot}}. Then by Newton's method one can solve recursively for each fixed $u$
\[\tilde{v}_{n+1}(u)=\tilde{v}_n(u)-\left[\partial_{\mathcal V_N}F(u,\tilde{v}_n(u))\right]^{-1}\cdot F(u,\tilde{v}_n(u))\] 
which converges to the unique fixed point $\tilde{v}(u)$. Observe that the inverse $\partial_{\mathcal V_N}F^{-1}$ is a bounded operator on $\mathcal{V}_N$; furthermore, the convergence is faster when $\partial_{\mathcal V_N}F$ is far from zero, i.e. when one choses a large $N$ (this also happens in \cite{TurPasCar}). In practice to solve numerically the recursion equation it will be necessary to consider a truncation of $\tilde{v}_n$. One can avoid the inversion of $\partial_{\mathcal V_N}F$ by  suitable approximate methods, e.g. \cite{broyden,kvaalen}.
\\
\par\noindent
{\bf Remark\,2.}\  Once the function $\tilde v$ is known, solving fully the variational problem is equivalent to finding all the roots $u$ of  the finite dimensional system of equations
\begin{equation}
\mathbb{P}_N  \nabla L\left (u+\tilde v\left(u\right)\right)=0,\ \ {\rm or}\ \
\left\langle \nabla L\left(u+\tilde v(u)\right),\delta u\right\rangle=0 \ \ \forall \delta u\in\mathcal{U}_{N} \label{vtilda1}
\end{equation}
By a straightforward argument --see (\ref{dS}) below-- we see that
these equations are precisely the comeout from stationarizing the {\it reduced action functional} $S$: 
\begin{equation}\label{vtilda2} 
\frac{\partial S}{\partial u}(u)=0
\end{equation}
where
\begin{equation}\mathcal{U}_{N}\cong\mathbb{R}^N \ni u\mapsto S(u)\equiv L(u+\tilde{v}(u))\in\mathbb{R}
\end{equation}
These equations can be written explicitly in term of the Fourier components of $u=(u^j_k)_{1\le k\le N,\,j\in\{1,\dots,\,n\}}$\begin{equation}
\frac{\partial S}{\partial u^j_k}=0\ \ \Leftrightarrow\ \ \ \frac{\pi^2k^2}{T^2}u^j_k=\left[\nabla V\left(\iota(u+\tilde{v}(u))\right)\right]^j_k, \ \ \ 1\leq k\leq N,\,j\in\{1,\dots,\,n\}
\end{equation}

Finally, to any root $u$ of (\ref{vtilda2}), there corresponds the admissible mechanical curve $\gamma=\iota(u+\tilde v(u))$.
\\
\par\noindent

We conclude this section by comparing the minimal size of the cutoff $N$ with the known results obtained by other techniques; in particular, in \cite{CarMar} it has been shown by a fixed point argument that (in a more general case) it is sufficient to take $\tilde{N}$ such that
\begin{equation}\frac{\tilde{C}T}{2\pi\tilde{N}}\left(1+\sqrt{2\tilde{N}}\right)=\alpha<1
\label{boundcardin}
\end{equation}
It is easy to prove that  if $\tilde{N}$ verifies the preceding inequality then it is larger than $\lfloor T\,\sqrt{C}/\pi\rfloor$: in fact, observing   that since $\tilde N\ge1$ we have $(1+\sqrt{2\tilde N})/2>1$, and recalling that $\tilde{C}=\max\{1,C\}$
\[
\dfrac\pi T\,\tilde{N}>\tilde{C}\,\dfrac{1+\sqrt{2\tilde{N}}}2>\tilde{C}\ge\sqrt{\tilde{C}}\ge\sqrt{C},\]
hence
\[\tilde{N}>T\,\sqrt C/\pi\ge\lfloor T\,\sqrt{C}/\pi\rfloor.\]

Thus our estimate of the cutoff $N$  improves   the preceding one, at least in the mechanical case.

\section{Morse index}
For a Lagrangian which is positive definite in the velocities, the Morse index, being the maximal dimension of a negative subspace of $d^2 L(c)$, is finite and equal to the number of conjugate points counted with their multiplicity \cite{morse, milnor, duistermaat} along the critical path $\gamma=\iota(c), c=u+\tilde v(u), \frac{\partial S}{\partial u}(u)=0$. Recall that the conjugate points are the obstruction for such a path to be a local minimum, as mentioned in the introduction. 

We want to relate this number to the index of the \lq finite Hessian\rq\ $d^2S$.

\begin{prop}The Hessian of the reduced functional $S(u)=L(u+\tilde{v}(u))$ on $\mathcal{U}_N$ is
\begin{equation}
d^2S(u)=\partial_{\mathcal{U}_N}^2L(c)-\partial_{\mathcal{V}_N}\partial_{\mathcal{U}_N}L(c)\cdot\left[\partial_{\mathcal{V}_N}^2L(c)\right]^{-1}\cdot\partial_{\mathcal{U}_N}\partial_{\mathcal{V}_N}L(c)
\end{equation}\label{13}
where $c=u+\tilde v(u)$.
\end{prop}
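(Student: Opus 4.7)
The plan is to obtain the formula by a direct chain-rule computation combined with implicit differentiation of the defining identity for $\tilde v$. First, I compute $dS$. Using the chain rule and the splitting $dL = \partial_{\mathcal U_N}L + \partial_{\mathcal V_N}L$,
\[
dS(u)[\delta u] = dL(c)\bigl[\delta u + d\tilde v(u)\,\delta u\bigr] = \partial_{\mathcal U_N}L(c)[\delta u] + \partial_{\mathcal V_N}L(c)\bigl[d\tilde v(u)\,\delta u\bigr].
\]
Because $\tilde v(u)$ satisfies $\mathbb Q_N\nabla L(u+\tilde v(u))=0$ by (\ref{vtilda}), the second summand vanishes, leaving $dS(u)=\partial_{\mathcal U_N}L(c)$; this is the simplification invoked in Remark~2.

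Next, I differentiate once more. Writing $c(u)=u+\tilde v(u)$, whose differential is $I+d\tilde v(u)$, and applying the chain rule to $u\mapsto\partial_{\mathcal U_N}L(c(u))=\mathbb P_N\nabla L(c(u))$, I get
\[
d^2 S(u) = \mathbb P_N\, d^2L(c)\,\bigl(I+d\tilde v(u)\bigr) = \partial_{\mathcal U_N}^2 L(c) + \partial_{\mathcal V_N}\partial_{\mathcal U_N}L(c)\cdot d\tilde v(u),
\]
using that $d\tilde v(u)$ takes values in $\mathcal V_N$ and hence $\mathbb P_N d^2L\,d\tilde v(u) = \mathbb P_N d^2L\,\mathbb Q_N\,d\tilde v(u)$.

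To eliminate $d\tilde v(u)$, I differentiate the defining identity $\mathbb Q_N\nabla L(u+\tilde v(u))\equiv 0$ with respect to $u$, which yields
\[
\partial_{\mathcal U_N}\partial_{\mathcal V_N}L(c) + \partial_{\mathcal V_N}^2 L(c)\cdot d\tilde v(u) = 0.
\]
Since $\partial_{\mathcal V_N}^2 L(c)$ is a Hilbert-space isomorphism of $\mathcal V_N$ (as remarked just after Proposition~\ref{monot}), I invert it and solve $d\tilde v(u) = -\bigl[\partial_{\mathcal V_N}^2 L(c)\bigr]^{-1}\partial_{\mathcal U_N}\partial_{\mathcal V_N}L(c)$. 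Substituting back into the previous display produces exactly the Schur-complement formula of the proposition.

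The computation is routine implicit differentiation and I do not anticipate a substantive obstacle. The only point requiring care is keeping the paper's subscript convention for $\partial_X\partial_Y L$ straight so that the compositions align: by definition $\partial_{\mathcal U_N}\partial_{\mathcal V_N}L = \mathbb Q_N d^2L\,\mathbb P_N$ sends $\mathcal U_N\to\mathcal V_N$, while $\partial_{\mathcal V_N}\partial_{\mathcal U_N}L = \mathbb P_N d^2L\,\mathbb Q_N$ sends $\mathcal V_N\to\mathcal U_N$, so the right-to-left reading of the Schur formula matches the output of the implicit differentiation.
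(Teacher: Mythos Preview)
Your proof is correct and follows essentially the same route as the paper: compute $dS$ via the chain rule and simplify using $\partial_{\mathcal V_N}L(u+\tilde v(u))=0$, differentiate that identity to solve for $d\tilde v(u)$, and substitute into the second differentiation of $dS$. The only difference is the order in which you present the last two steps, and your added remark on the subscript conventions is a welcome clarification.
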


\begin{proof}Differentiating $S$ once, we have
\begin{equation}\label{dS}
dS(u)=\partial_{\mathcal{U}_N}L(u+\tilde{v}(u))+\partial_{\mathcal{V}_N}L(u+\tilde{v}(u))\cdot\frac{\partial\tilde{v}}{\partial u}(u)=\partial_{\mathcal{U}_N}L(u+\tilde{v}(u))
\end{equation}
because $\partial_{\mathcal{V}_N}L(u+\tilde{v}(u))=0$ identically by definition of $\tilde{v}$. Differentiating this last identity we get
\[0=\partial_{\mathcal{U}_N}\partial_{\mathcal{V}_N}L(u+\tilde{v}(u))+\partial_{\mathcal{V}_N}^2L(u+\tilde{v}(u))\cdot\frac{\partial\tilde{v}}{\partial u}(u)\]
which, since $\partial_{\mathcal{V}_N}^2L(u+\tilde{v}(u))$ is strictly positive, gives
\begin{equation}\label{dvdu}
\frac{\partial\tilde{v}}{\partial u}(u)=-\left[\partial_{\mathcal{V}_N}^2L(u+\tilde{v}(u))\right]^{-1}\cdot\partial_{\mathcal{U}_N}\partial_{\mathcal{V}_N}L(u+\tilde{v}(u))
\end{equation}
Differentiating (\ref{dS}) once more, we get
\[d^2S(u)=\partial_{\mathcal{U}_N}^2L(u+\tilde{v}(u))+\partial_{\mathcal{V}_N}\partial_{\mathcal{U}_N}L(u+\tilde{v}(u))\cdot\frac{\partial\tilde{v}}{\partial u}(u)\]
that together with (\ref{dvdu}) gives the thesis.
\end{proof}

The Hessian,  on $\mathcal{H}=\mathcal{U}_N\oplus\mathcal{V}_N$, can be written in the block form
\[d^2 L=\left(\begin{array}{cc}
\partial_{\mathcal{U}_N}^2L & \partial_{\mathcal{V}_N}\partial_{\mathcal{U}_N}L\\
\\
\partial_{\mathcal{U}_N}\partial_{\mathcal{V}_N}L &\partial_{\mathcal{V}_N}^2L
\end{array}\right)\]
where $(\partial_{\mathcal{U}_N}\partial_{\mathcal{V}_N}L)^*=\partial_{\mathcal{V}_N}\partial_{\mathcal{U}_N}L$.

\begin{prop}The Morse index $i_L(u+\tilde{v}(u))$ of a solution equals the index of $S$ at $u$:
\begin{equation}i_L(u+\tilde{v}(u))= i_S(u)
\label{indexest}
\end{equation}
The same is true for the nullity.
\end{prop}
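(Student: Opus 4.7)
The plan is to apply a Schur-complement congruence, essentially Sylvester's law of inertia in the Hilbert-space setting, to the block decomposition of $d^2L(c)$ along $\mathcal{H}=\mathcal{U}_N\oplus\mathcal{V}_N$. Writing
\[
d^2L(c)=\begin{pmatrix} A & B^* \\ B & D \end{pmatrix},\qquad
A=\partial_{\mathcal{U}_N}^2 L(c),\ \ B=\partial_{\mathcal{U}_N}\partial_{\mathcal{V}_N}L(c),\ \ D=\partial_{\mathcal{V}_N}^2 L(c),
\]
I first recall from the computation that produced Proposition~\ref{monot} that $D$ is uniformly positive definite on $\mathcal{V}_N$, hence a Hilbert-space isomorphism with bounded inverse $D^{-1}$.

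The key step is to complete the square in the $\mathcal{V}_N$-variable (the same algebra that underlies the Schur-complement formula for $d^2S(u)$ already established above). Concretely, introduce the change of variables $\Phi:\mathcal{U}_N\oplus\mathcal{V}_N\to\mathcal{U}_N\oplus\mathcal{V}_N$, $\Phi(u,w)=(u,\,w-D^{-1}Bu)$; this is a bounded linear bijection with bounded inverse $(u,v)\mapsto(u,v+D^{-1}Bu)$. A direct computation using self-adjointness of $D$ yields, for every $(u,w)\in\mathcal{H}$,
\[
\bigl\langle d^2L(c)\,\Phi(u,w),\,\Phi(u,w)\bigr\rangle=\bigl\langle d^2S(u)\,u,\,u\bigr\rangle+\bigl\langle D\,w,\,w\bigr\rangle.
\]

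Because $\Phi$ is a topological isomorphism, negative-definite subspaces and the kernel of $d^2L(c)$ correspond bijectively and dimension-preservingly to those of the decoupled quadratic form on the right. On the right, $D$ is strictly positive, so the $\mathcal{V}_N$-summand contributes nothing to either the index or the nullity; all indefinite content sits in the finite-dimensional $\mathcal{U}_N$-summand and is governed exactly by $d^2S(u)$. This at once yields $i_L(u+\tilde v(u))=i_S(u)$ and the analogous equality of nullities. I anticipate no serious obstacle: the only things to verify carefully are the algebraic completion of the square at the operator level (a one-line calculation using $D=D^*$) and the fact that $\Phi$ is a genuine isomorphism, which follows at once from boundedness of $D^{-1}$ together with the finite-dimensionality of $\mathcal{U}_N$ (making $B$ automatically bounded).
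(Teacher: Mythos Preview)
Your proof is correct and is essentially the same as the paper's: your map $\Phi$ is exactly the triangular operator $T$ used there, and your identity $\bigl\langle d^2L(c)\,\Phi(u,w),\,\Phi(u,w)\bigr\rangle=\bigl\langle d^2S(u)\,u,\,u\bigr\rangle+\bigl\langle D\,w,\,w\bigr\rangle$ is the quadratic-form statement of the block-diagonalization $T^{*}\,d^2L\,T=\mathrm{diag}(d^2S,\,D)$. The only cosmetic difference is that you phrase the argument as completing the square in the $\mathcal{V}_N$-variable, while the paper writes it as an explicit matrix congruence.
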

\begin{proof}
A standard technique (see e.g.\cite{Carl, Co}) to block--diagonalize the bilinear form $d^2L$ is by means of the block matrix
\[T=\left(\begin{array}{cc}
\mathbb{I}_{\mathcal{U}_N} & \mathbb{O}\\ 
\\
-\left[\partial_{\mathcal{V}_N}^2L\right]^{-1}\cdot\partial_{\mathcal{U}_N}\partial_{\mathcal{V}_N}L &\mathbb{I}_{\mathcal{V}_N}
\end{array}\right)\]
which  clearly defines a bounded linear operator on $\mathcal{H}=\mathcal U_N\oplus\mathcal V_N$, with bounded inverse
\[T^{-1}=\left(\begin{array}{cc}
\mathbb{I}_{\mathcal{U}_N} & \mathbb{O}\\
\\
\left[\partial_{\mathcal{V}_N}^2L\right]^{-1}\cdot\partial_{\mathcal{U}_N}\partial_{\mathcal{V}_N}L &\mathbb{I}_{\mathcal{V}_N}
\end{array}\right)\]The same holds for the adjoint $T^*$. We can thus use them to perform a change of basis on $\mathcal{H}$, so that $d^2L$ takes the block-diagonal form
\[T^*\,d^2L\,T=\left(\begin{array}{cc}
\partial_{\mathcal{U}_N}^2L-\partial_{\mathcal{V}_N}\partial_{\mathcal{U}_N}L\cdot\left[\partial_{\mathcal{V}_N}^2L\right]^{-1}\cdot\partial_{\mathcal{U}_N}\partial_{\mathcal{V}_N}L & \mathbb{O}\\
\\
\mathbb{O} &\partial_{\mathcal{V}_N}^2L
\end{array}\right)\]
It follows immediately that  negative and null spaces of $d^2L(u+\tilde{v}(u))$ and $d^2S(u)$ are isomorphic {\sl via} $T$.
\end{proof}
\par\noindent
 {\bf Remark\,3.}\ Contained in the previous argument is the a priori upper bound for the index:
 \begin{equation}
 i_L\leq \dim {\mathcal U}_N=nN
 \end{equation}

\section{Dirichlet problem}
Generalizing the preceding one dimensional case, 
we consider an open bounded $\Omega\subset\mathbb{R}^m$ with regular boundary $\partial\Omega$; in this case the Hilbert space we need is $\mathcal{H}\equiv H^1_0 (\Omega, \mathbb{R})$, the closure of $\mathcal{C}^\infty_c$ in $H^1(\Omega)$. The variational problem $\nabla L(\phi)=0,   \phi\in \mathcal{H}$, for the functional
\begin{equation}
L(\phi)=\int_\Omega\left(\frac{1}{2}\left|\nabla \phi(x)\right|^2-V(\phi(x))\right)dx
\end{equation}
can be regarded as a weak formulation of the Dirichlet problem
\[\Delta \phi=-V'(\phi),\ \ \ \ \phi|_{\partial\Omega}=0\]
In this case the basis vectors will be the eigenfunctions of the   Laplacian; as customary we arrange the distinct eigenvalues of the opposite of the Laplacian in  increasing order, with each $\lambda_k$ repeated according to multiplicity: $0<\lambda_1<\lambda_2\le \lambda_3\dots$,  and $\lambda_k\to+\infty$ as $k\to\infty$; 
an orthogonal basis will be $\{\varphi_k: \ \ k\ge 1\}$, where $-\Delta \varphi_k=\lambda_k\, \varphi_k$ and 
$\varphi_k|_{\partial\Omega}=0
$.
 Of course all this  depends on $\Omega$. For every integer $N\ge1$ we define
\[\mathcal{U}_N=\mathrm{span}\left\{\varphi_k\in\mathcal{H}\equiv H^1_0 (\Omega, \mathbb{R}):\ k\leq N\right\}\equiv\mathbb{P}_N\mathcal{H}\]
and $\mathcal{V}_N\equiv\mathbb{Q}_N\mathcal{H}$ is the orthogonal complement of $\mathcal U_N$.   Assuming  that 
\begin{equation}
\sup_{\phi\in\mathbb{R}}|V''(\phi)|=C<+\infty
\end{equation}
we can  again write (recalling that $\int_\Omega |\nabla v|^2=-
\int_\Omega v\Delta v $)
\[\left\langle \mathbb{Q}_N\nabla L (\bar{u}+v_2)-\mathbb{Q}_N\nabla L(\bar{u}+v_1),v_2-v_1\right\rangle\geq\]
\[\geq \int_\Omega\left(|\nabla (v_2-v_1)|^2-C\,|v_2-v_1|^2\right)dx\equiv\int_\Omega\left(|\nabla v|^2-C\,|v|^2\right)=\]
\[=\sum_{k>N}\left(\lambda_k-C\right)|v^{(k)}|^2\]
Pick the smallest integer $N$ such that $\lambda_{N+1}>C$ and set 
\[\mu=\min\left\{(\lambda_k-C)/\lambda_k:\,k>N\right\}\left(=1-C/\lambda_{N+1}\right)
;\]
we have that 
$\mu>0$ and as before
 \[\sum_{k>N}\left(\lambda_k-C\right)|v^{(k)}|^2\geq
 \mu\sum_{k>N}\lambda_k\,|v^{(k)}|^2=
  \mu\, || v||^2_{{\mathcal V}_N}
 ,\]
 giving the required strong monotonicity. In particular, one gets that the Morse index $i_L(c)$ of any solution $c=u+\tilde{v}(u)$ equals its reduced index $i_S(u)$, which is always bounded from above by the dimension of ${\mathcal U}_N$.
 \\
\par\noindent
 {\bf Remark\,4.}\ Recall that  for $m=1$ and vector valued functions in $\mathbb{R}^n$ we found that the minimal dimension of $\mathcal{U}_N$ to achieve the reduction was $nN$, with $N=\lfloor T\,\sqrt{C}/\pi\rfloor$. 
 This dimension is exactly the number of the eigenvalues of $-\Delta$ not larger than $C$; when $m\geq 1$ it  grows as $C^{m/2}$ with $C\to \infty$ as stated by Weyl's law \cite{levitan}:
\begin{equation}
\mathrm{dim}(\mathcal{U}_N)= \frac{\mathrm{vol}({B}_m)}{(2\pi)^m}\, \mathrm{vol}(\Omega)\,C^{\frac{m}{2}}+O(C^{\frac{m-1}{2}})
\end{equation}
where ${B}_m$ is the Euclidean ball in $\mathbb{R}^m$. Let us remark that the Morse index, even in the case $m>1$, can be related to the multiplicity of the solutions via the min-max theory (see e.g. \cite{solimini, lions}). Furthermore  the solutions and their geometrical features can also be studied. In fact, for a given solution $\phi$, define the nodal set $Z\subset\Omega$ as the closure of $\{x\in\Omega: \phi(x)=0\}$. It can be shown \cite{yang} that the Morse index provides an upper bound to the $L^p$ norm of the solution, to the Hausdorff measure of $Z$ and to the vanishing order of $x\in Z$; in our case this provides an a priori estimate of such properties of the solutions, since the upper bound on $i_L$ depends only on $C=\sup_{\phi\in\mathbb{R}}|V''(\phi)|$ and on the domain $\Omega$.
\section*{Appendix}
We present here a simple, self contained argument, for the proof of \emph{Proposition 2}.
\begin{prop} Let $Y$ be a real Hilbert space, $X$ a Banach space, $D$ an open subset of $X$. Assume that $F:D\times Y\to Y$ is a $C^1$ map, and that for every $x\in D$ the $x-$section of $F$, i.e. the map $F(x,\cdot):Y\to Y$ given by $y\mapsto F(x,y)$ is strongly monotone, that is, there exists $\mu_x>0$ such that
\[\langle F(x,u)-F(x,v), u-v\rangle\ge\mu_x\,\|u-v\|^2\quad\text{for every}\quad u,v\in Y.\]
Then the zero--set $Z(F)=\{(x,y)\in D\times Y:\,F(x,y)=0\}$ is the graph of a $C^1$ map $\varphi:D\to Y$.\end{prop}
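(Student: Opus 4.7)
The strategy is, for each fixed $x\in D$, to first produce a unique zero $y=\varphi(x)$ of $F(x,\cdot)$, and then to upgrade the pointwise map $\varphi$ to a $C^1$ map by a single application of the standard implicit function theorem. Uniqueness is immediate from strong monotonicity: if $F(x,u)=F(x,v)=0$, then $0=\langle 0,u-v\rangle\ge\mu_x\|u-v\|^2$ forces $u=v$.

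For existence with $x\in D$ fixed, the plan is to show that the image $A:=F(x,Y)\subset Y$ is nonempty, open, and closed, and to conclude $A=Y$ --- so in particular $0\in A$ --- from the connectedness of $Y$. Two consequences of strong monotonicity feed this argument. First, differentiating the inequality (set $v=u+th$, divide by $t^2$, let $t\to 0$) yields $\langle\partial_y F(x,y)h,h\rangle\ge\mu_x\|h\|^2$ for all $y,h\in Y$, so by Lax--Milgram $\partial_y F(x,y)$ is a Hilbert space isomorphism with $\|[\partial_y F(x,y)]^{-1}\|\le 1/\mu_x$, uniformly in $y$; combined with $F\in C^1$, the inverse function theorem then makes $F(x,\cdot)$ a local diffeomorphism at every point of $Y$, so $A$ is open. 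Second, Cauchy--Schwarz applied to strong monotonicity gives the reverse Lipschitz estimate $\mu_x\|u-v\|\le\|F(x,u)-F(x,v)\|$; if $F(x,y_n)$ is Cauchy, then so is $(y_n)$, converging to some $y$ with $F(x,y)=\lim F(x,y_n)$ by continuity, so $A$ is closed. Since $F(x,0)\in A$, connectedness of $Y$ then forces $A=Y$.

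The $C^1$ regularity of $\varphi$ follows by applying the implicit function theorem to $F$ at any point $(x_0,\varphi(x_0))$: since $\partial_y F(x_0,\varphi(x_0))$ is a Banach space isomorphism and $F$ is $C^1$, a local $C^1$ solution exists and must coincide with $\varphi$ by uniqueness of the zero. The main obstacle is the existence step: a direct Banach fixed-point attack on $F(x,y)=0$ via $y\mapsto y-\alpha F(x,y)$ would require a global Lipschitz bound on $F(x,\cdot)$, which $C^1$-regularity alone does not supply in infinite dimensions. The openness/closedness/connectedness argument above sidesteps this by invoking only the uniform coercivity of $\partial_y F$ and the reverse Lipschitz estimate, both direct consequences of strong monotonicity.
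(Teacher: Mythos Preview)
Your proof is correct and follows essentially the same route as the paper's: both arguments fix $x$, show the image $F(x,Y)$ is open (via strong monotonicity of the derivative and the inverse function theorem) and closed (via the reverse Lipschitz estimate), conclude surjectivity by connectedness, and then apply the implicit function theorem for $C^1$ regularity. The only cosmetic difference is that you invoke Lax--Milgram to see that $\partial_y F(x,y)$ is an isomorphism, whereas the paper spells out directly that its range is closed with trivial orthogonal complement.
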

\begin{proof} The point $x\in D$ being given, denote for simplicity by $f$ the $x-$section of $F$, that is, $f(u)=F(x,u)$ for every $u\in Y$, and let $\mu=\mu_x $. It is immediate to see that for all pairs $u,v\in  Y$ we have $\|f(u)-f(v)\|\ge\mu\,\|u-v\|$. This clearly shows that $f$ is injective with a Lipschitz continuous inverse $g:f(Y)\to Y$; moreover $f$ maps closed subsets of $Y$ into closed subsets of $Y$: in fact if $v_k=f(u_k)$ converges to $v\in Y$, then $u_k=g(v_k)$ is a Cauchy sequence in $Y$; if $u$ is its limit, then $v=f(u)$ by continuity of $f$. In particular, $f(Y)$ is closed in $Y$. Next observe that for every $c\in Y$ the differential $df(c)$ of $f$ at $c$ is also strongly monotone (with the same constant $\mu$). In fact, given $h\in Y$  we have, for every $t\in\mathbb R$:
\[\langle f(c+th)-f(c),th\rangle\ge\mu\,\|t\,h\|^2;\]
 assuming $t$ non zero:
 \[\left\langle\dfrac{f(c+th)-f(c)}t,h\right\rangle\ge \mu\,\|h\|^2,\]
 and when $t\to 0$ we get
 \[\langle df(c)[h],h\rangle\ge  \mu\,\|h\|^2\]
  as desired. 

By what we have just proved the range $df(c)[Y]$ is then a closed subspace of $Y$; and if $v$ is orthogonal to it we get
\[0=\langle df(c)[v],v\rangle\ge\mu\,\|v\|^2,\]
so that  $v=0$. Thus $df(c)$ is onto, and hence an isomorphism of $Y$ onto itself. The local inversion theorem is then applicable to $f$, proving that it is a local diffeomorphism at every point of $Y$; then $f(Y)$ is open, and being also closed it coincides with $Y$. We have proved that for every $x\in D$ the $x-$section $F(x,\cdot)$ is a  $C^1$ self--diffeomorphism of $Y$.  The conclusion is now immediate: for every $x\in D$ let $\varphi(x)$ be the unique zero of $y\mapsto F(x,y)$; the implicit function theorem (see, e.g. \cite{AP}) proves that $\varphi$ is of class $C^1$.\end{proof}

\par\noindent
{\bf Remark\,5.} If $X=\mathcal U_N$, $Y=\mathcal V_N$ and $L=u$, in Section 2 we have the following setting: a scalar valued function $u:X\times Y\to\mathbb R$, twice continuously differentiable in the Fr\'echet sense, such that the second differential with respect to $Y$ is strongly positive, i. e. there is $\mu>0$ such that
\[\partial^2_Yu(x,y)[h,h]\ge\mu\,\|h\|^2,\]
and moreover $\lim_{\|y\|\to\infty}u(x,y)=+\infty$ (also implied from this positivity). It is well--known that these hypotheses imply that the $x-$section of $u$ attains its absolute minimum in a unique point $\varphi(x)\in Y$, for every $x\in X$. The proof of this genuinely variational feature is based on the weak compactness of closed bounded convex subsets in a Hilbert space; this point $\varphi(x)$ is clearly the only singular point of the differential $\partial_Yu(x,y)\in Y^\ast$, that is, the unique point $\varphi(x)\in Y$ such that $\partial_Yu(x,\varphi(x))=0$. This implies, fully inside a variational scenario, the existence and uniqueness of $\varphi$. However, even though this argument is undoubtedly elegant,  smoothness of $\varphi$ does not follow, and this technique does not seem to be simpler than the one proposed in the previous Sections, which involves only elementary analysis, and no weak compactness arguments.

\bigskip

\par\noindent
{\bf Acknowledgements.}\ We thanks Giulia Antinori, Marco Degiovanni and Luigi Salce for fruitful conversations and suggestions.

\end{document}